\documentclass{article}

%

\usepackage{graphicx}
\usepackage{subfigure} 
\usepackage{color}
\usepackage{algorithm,algorithmic}
\usepackage{amsmath,amsfonts}
\usepackage{array}
\usepackage{verbatim}
\usepackage{graphics}
\usepackage{epstopdf}
\usepackage{multicol}
\usepackage{url}
\usepackage{amsthm}


\newtheorem*{theorem}{Theorem}

\DeclareMathOperator*{\E}{\mathbb{E}}

\begin{document}

\title{Dimension Independent Matrix Square using MapReduce (DIMSUM)}

\date{22 October 2014}

\author{Reza Bosagh Zadeh and Gunnar Carlsson \\
       Stanford University\\
       Stanford CA 94305, USA}
\maketitle

\begin{abstract}
We compute the singular values of an $m \times n$ sparse matrix $A$ in a distributed setting, without communication 
dependence on $m$, which is useful for very large $m$.
In particular, we give a simple nonadaptive sampling scheme where the singular values of $A$ are estimated within relative
error with constant probability. Our proven bounds focus on the MapReduce framework, which has
become the de facto tool for handling such large matrices that cannot be stored or even streamed
through a single machine.

On the way, we give a general method to compute $A^TA$. We preserve singular
values of $A^TA$ with $\epsilon$ relative error with shuffle size $O(n^2/\epsilon^2)$ and reduce-key complexity $O(n/\epsilon^2)$. We further show that if
only specific entries of $A^TA$ are required and $A$ has nonnegative entries, then we can reduce the shuffle
size to $O(n \log(n) / s)$ and reduce-key complexity to $O(\log(n)/s)$, where $s$ is the minimum cosine similarity for the entries being estimated.
All of our bounds are independent of $m$, the larger dimension. We provide open-source implementations in Spark and Scalding, along
with experiments in an industrial setting.
\end{abstract}

\section{Introduction}

There has been a flurry of work to solve problems in numerical linear algebra via fast approximate randomized
algorithms. Starting with \cite{frieze1} many algorithms have been proposed over older algorithms 
\cite{review1,review2,review3,review4,review5,review6,review7,review8,review9,review10,review11,review12,review13}, with
results satisfying the traditional Monte Carlo performance guarantees: small error with high probability.

These proposed algorithms require either streaming, or having access to the entire matrix $A$ on a single
machine, or communicating too much data between machines. 
This is not feasible for very large $m$ (for example $m=10^{13}$). In such cases, $A$
cannot be stored  or streamed through a single machine - let alone be used in computations. For such cases, MapReduce \cite{mapreduce}
has become the de facto tool for handling very large datasets.

MapReduce is a programming model for processing large data sets, typically used to do distributed computing on clusters of commodity computers. With large amount of processing power at hand, it is very tempting to solve problems by brute force. However, we combine clever sampling techniques with the power of MapReduce to extend its utility.

Given an $m \times n$ matrix $A$ with each row having at most $L$ nonzero entries, we show how to compute
the singular values and and right singular vectors of $A$ without dependence on $m$, in a MapReduce environment.
The SVD of $A$ is written $A = U \Sigma V^T$, where $U$ is $m \times n$, $\Sigma $ is $n \times n$, and $V$ is $n\times n$.

We compute $\Sigma$ and $V$. We do this by first computing $A^TA$, which we do without dependence
on $m$. Since $A^TA = V \Sigma^2 V^T$ is $n \times n$, for small $n$ (for example $n=10^4$) we can compute the eigen-decomposition of
$A^TA$ directly and retrieve $V$ and $\Sigma$. What remains is to compute $A^TA$ efficiently and without harming its singular values, which is what the rest of the paper is focused on. 

Our main result is Algorithms \ref{alg:discomapper} and \ref{alg:discoreducer}, 
along with proven guarantees given in Theorem \ref{svdtheorem} which proves a relative error bound using the spectral norm. The
proof uses a new singular value concentration inequality from \cite{latala} that has not  seen much usage
by the theoretical computer science community.

\section{Formal Preliminaries} \label{sec:formal}

Label the columns of $A$ as $c_1, \ldots, c_n$, rows as $r_1, \ldots, r_m$, and the individual entries as $a_{ij}$. The matrix is stored row-by-row on disk and read via mappers. We focus on the case where each dimension is sparse with at most $L$ nonzeros per row  therefore the natural way to store the data is to segment into rows. Throughout the paper we assume the entries of $A$ have been
scaled to be in $[-1, 1]$, which can be done with little communication by finding the largest magnitude element.

We use the matrix spectral norm throughout, which for any $m \times n$ matrix $A$ is defined as
 $$||A||_2 = \max_{x \in \mathbb{R}^m,y \in \mathbb{R}^n} \frac{x^TAy}{||x||_2||y||_2}$$
Unless otherwise denoted, the norm used anywhere in this paper is the spectral norm, which for regular vectors
degenerates to the vector $l_2$ norm.

We concentrate on the regime where $m$ is very large, e.g. $m=10^{13}$, but $n$ is not too large, e.g. $n=10^4$, such that
we can compute the SVD of an $n\times n$ dense matrix on a single machine. The magnitudes of each column is assumed
to be loaded into memory and available to both the mappers and reducers. The
magnitudes of each column are natural values to have computed already,
or can be computed with a trivial mapreduce.

\subsection{Naive Computation}

The naive way to compute $A^TA$ on MapReduce is to materialize all dot products between columns of $A$ trivially.
For purposes of demonstrating the complexity measures
for MapReduce, we briefly write down the Naive algorithm to compute $A^TA$.

\algsetup{indent=2em}
\begin{algorithm}[H]
\caption{NaiveMapper$(r_i)$} \label{alg:naivemapper}
\begin{algorithmic} [4]
\FOR{all pairs $(a_{ij}, a_{ik})$ in $r_i$}
	\STATE  Emit $((c_j, c_k) \rightarrow a_{ij} a_{ik})$
\ENDFOR
\end{algorithmic}
\end{algorithm}

\algsetup{indent=2em}
\begin{algorithm}[H]
\caption{NaiveReducer$((c_i, c_j), \langle v_1, \ldots, v_R \rangle)$} \label{alg:naivereducer}
\begin{algorithmic} [4]
\STATE output $ c_i^T c_j \rightarrow \sum_{i=1}^{R} v_i$
\end{algorithmic}
\end{algorithm}

\subsection{Complexity Measures}

There are two main complexity measures for MapReduce: ``shuffle size", and ``reduce-key complexity". 
These complexity measures together capture the bottlenecks when handling data on multiple machines: first we can't have
too much communication between machines, and second we can't overload a single machine. 
The number of emissions in the map phase is called the ``shuffle size", since that data needs to be shuffled around the network to reach the correct reducer. The maximum number of items reduced to a single key is called the ``reduce-key complexity" and measures
how overloaded a single machine may become \cite{ashishperf}.

It can be easily seen that the naive approach for computing $A^TA$ will have $O(mL^2)$ emissions, which for the example parameters we gave ($m=10^{13}, n=10^4, L=20)$ is infeasible.  Furthermore, the maximum number of items reduced to a single key can be as large as $m$. Thus the ``reduce-key complexity" for the naive scheme is $m$.

We can drastically reduce the shuffle size and reduce-key complexity by some clever sampling with the DIMSUM scheme
described in this paper. In this case, the output of the reducers are random variables whose expectations are cosine similarities i.e. normalized entries of $A^TA$. Two proofs are needed to justify the effectiveness of this scheme. First, that the expectations are indeed correct and obtained with constant probability, and second, that the shuffle size is greatly reduced. We prove both of these claims. In particular, in addition to correctness, we prove that for relative error $\epsilon$, the shuffle size of our scheme is only $O(n^2/\epsilon^2)$, with no dependence on the dimension $m$, hence the title of this paper. 

This means as long as there are enough mappers to read the data, our sampling scheme can be used to make the shuffle size tractable. Furthermore, each reduce-key gets at most $O(n/\epsilon^2)$ values, thus making the reduce-key complexity tractable, too. Within Twitter Inc, we use the DIMSUM sampling scheme to compute similar users \cite{blogpost,wtfpaper}. We have also used the scheme to find highly similar pairs of words, by taking each dimension to be the indicator vector that signals in which tweets the word appears. We empirically verified the proven claims in this paper, but do not report experimental results since we are primarily focused on the proofs.

\subsection{Related Work}

\cite{frieze1} introduced a sampling procedure where rows and columns of $A$ are picked with probabilities
proportional to their squared lengths and used that to compute an approximation to $A^TA$. 
Later \cite{achlioptas} and \cite{arora} improved the sampling procedure.
To implement these approximations to $A^TA$ on MapReduce one would
need a shuffle size dependent on $m$ or overload a single machine. 
We improve this to be independent of $m$ both in shuffle size and reduce-key complexity.

Later on \cite{adaptive} found an adaptive sampling scheme to improve
the scheme of \cite{frieze1}. Since the scheme is adaptive, it would require
too much communication between machines holding $A$. In particular a MapReduce
implementation would still have shuffle size dependent on $m$, 
and require many (more than 1) iterations.

There has been some effort to reduce the number of passes required through
the matrix $A$ using little memory, in the streaming model \cite{streaminglinear}. The
question was posed by \cite{muthukrishnan} to determine in the streaming model
various linear algebraic quantities. The problem was posed again by \cite{sarlos} who asked
about the time and space required for an algorithm not using too many passes. The streaming
model is a good one if all the data can be streamed through a single machine, but with $m$
so large, it is not possible to stream $A$ through a single machine. Splitting the work of reading
$A$ across many mappers is the job of the MapReduce implementation and one of its major
advantages \cite{mapreduce}.

There has been recent work specifically targeted at computing the SVD on MapReduce \cite{austin} 
in a stable manner via $QR$ factorizations and bypassing $A^TA$, with shuffle size and reduce-key complexity
 both dependent on $m$.

In addition to computing entries of $A^TA$, our sampling scheme can be used to implement many similarity measures.
We can use the scheme to efficiently compute four similarity measures: Cosine, Dice, Overlap, and the Jaccard similarity measures, 
with details and experiments given in \cite{disco,wtfpaper}, whereas this paper is more focused on matrix computations
and an open-source implementation.

\section{Algorithm}

Our algorithm to compute $A^TA$ efficiently is given below in Algorithms \ref{alg:discomapper} and \ref{alg:discoreducer}.

\algsetup{indent=2em}
\begin{algorithm}[H]
\caption{DIMSUMMapper$(r_i)$}\label{alg:discomapper}
\begin{algorithmic} [4]
\FOR{all pairs $(a_{ij}, a_{ik})$ in $r_i$}
	\STATE With probability $$\min\left(1,\gamma \frac{1}{||c_j|| ||c_k||}\right)$$ emit $((c_j, c_k) \rightarrow a_{ij} a_{ik})$
\ENDFOR
\end{algorithmic}
\end{algorithm}

\algsetup{indent=2em}
\begin{algorithm}[H]
\caption{DIMSUMReducer$((c_i, c_j), \langle v_1, \ldots, v_R \rangle)$}\label{alg:discoreducer}
\begin{algorithmic} [8]
\IF {$ \frac{\gamma}{||c_i|| ||c_j||} > 1$}
	\STATE output $b_{ij} \rightarrow \frac{1}{||c_i|| ||c_j||} \sum_{i=1}^{R} v_i$
\ELSE 
	\STATE output $b_{ij} \rightarrow \frac{1}{\gamma} \sum_{i=1}^{R} v_i$
\ENDIF

\end{algorithmic}
\end{algorithm}

It is important to observe what happens if the output `probability' is greater than 1. We certainly Emit, but when the output probability is greater than 1, care must be taken while reducing to scale by the correct factor,
 since it won't be correct to divide by $\gamma$, which is
the usual case when the output probability is less than 1. Instead, the sum in Algorithm \ref{alg:discoreducer} obtains the dot product, because for the pairs where the output probability is greater than 1, DIMSUMMapper effectively always emits. 
We do not repeat this point later in the paper, nonetheless it is an important one which arises during implementation.

\section{Correctness}

Before we move onto the correctness of the algorithm, we must state Latala's Theorem \cite{latala}.
This theorem talks about a general model of random matrices whose entries are independent
centered random variables with some general distribution (not necessarily normal). The
largest singular value (the spectral norm) can be estimated by Latala's theorem for
general random matrices with non-identically distributed entries:

\begin{theorem} \label{latala} \cite{latala}. Let $X$ be a random matrix whose entries $x_{ij}$
are independent centered random variables with finite fourth moment. Denoting $||X||_2$ as the
matrix spectral norm, we have

$$\E ||X||_2  \leq C \left[  \max_i \left(\sum_j \E x_{ij}^2 \right)^{1/2} + \max_j \left(\sum_i \E x_{ij}^2 \right)^{1/2}
 + \left(\sum_{i,j} \E x_{ij}^4 \right)^{1/4}   \right].$$
\end{theorem}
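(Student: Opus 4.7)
The plan is to use the moment (trace) method. By Jensen, $\E \|X\|_2 \leq (\E \|X\|_2^{2k})^{1/(2k)}$ for any positive integer $k$, and $\|X\|_2^{2k} \leq \mathrm{Tr}((XX^T)^k)$, so the task reduces to bounding $\E\,\mathrm{Tr}((XX^T)^k)$ and then taking a $(2k)$-th root with a suitable choice of $k$.

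Expanding $\mathrm{Tr}((XX^T)^k)$ writes the trace as a sum over closed walks of length $2k$ in the complete bipartite graph on the row and column indices of $X$, where each summand is a product of $2k$ entries of $X$ along the walk. Since the $x_{ij}$ are independent and centered, any walk in which some edge is traversed only once contributes zero, so only walks that use every edge at least twice survive.

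I would then classify the surviving walks by their underlying graph structure (distinct row vertices, distinct column vertices, distinct edges, cyclomatic complexity) and apply H\"older per edge to bound each walk's contribution by a product of second- and fourth-moment factors. Tree-like walks (each edge traversed exactly twice), after summing over index choices in the right order, should yield powers of $\max_i \sum_j \E x_{ij}^2$ or $\max_j \sum_i \E x_{ij}^2$, depending on which bipartite side carries the leaves at each stage of summation. Walks containing a genuine cycle are forced to reuse some edge at least four times, and their summation pulls out powers of $\sum_{i,j} \E x_{ij}^4$. Gathering the three kinds of contributions gives a bound of the form $(\mathrm{const}\cdot k)^{O(k)}$ times the $(2k)$-th power of the right-hand side of the theorem, and the $(2k)$-th root with $k$ a bounded constant then yields the stated inequality.

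The principal obstacle is the middle step: for non-identically distributed entries one cannot simply plug in a global moment bound per edge, since that would produce only the crude quantity $\sum_{i,j}\E x_{ij}^2$ instead of the sharper row and column maxima. The combinatorial challenge is to peel off the vertex summations of each tree-walk in an order that exposes an inner sum over a row (or column) coordinate, producing a second-moment row- or column-sum that is then dominated by the corresponding maximum. Latala's contribution is essentially this refined walk accounting; an alternative route via the non-commutative Khintchine inequality together with symmetrization and decoupling is conceivable, but it tends to shift the combinatorial burden to the decoupling step rather than removing it.
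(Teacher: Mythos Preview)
The paper does not prove this theorem. It is quoted from Latala's original paper and used as a black box in the proof of Theorem~\ref{svdtheorem}; there is no argument in the present paper to compare your proposal against.

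As for your sketch itself, the broad outline (trace method, expansion over closed bipartite walks, vanishing of walks with a singly-used edge) is indeed the skeleton of Latala's proof, but two points in your write-up would not survive a careful execution. First, taking $k$ to be a bounded constant cannot work: $\mathrm{Tr}((XX^T)^k)$ overestimates $\|X\|_2^{2k}$ by a factor up to $\min(m,n)$, so after the $(2k)$-th root you lose a factor $(\min(m,n))^{1/(2k)}$, which is bounded only if $k$ grows at least logarithmically in the dimension. Latala in fact lets $k$ tend to infinity and controls the combinatorial constants so that the $k$-dependent factors in the walk count are absorbed after the root. Second, the assertion that ``walks containing a genuine cycle are forced to reuse some edge at least four times'' is false (traverse a $4$-cycle twice and every edge is used exactly twice); the fourth-moment term in Latala's bound does not come from edges of multiplicity four but from a more delicate peeling argument in which, once tree-like branches are stripped, the remaining cyclic core is bounded via a Cauchy--Schwarz step that pairs up edges and produces $\E x_{ij}^4$. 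Your high-level plan is on the right track, but the combinatorics needed to extract exactly the three displayed quantities is the substance of Latala's paper and is not as simple as the tree/cycle dichotomy you describe.
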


We analyze the second and fourth central moments of the entries of the estimate for $A^TA$,
and show that by Latala's theorem, the singular values are preserved with constant probability.
Let the matrix output by the DIMSUM algorithm be called $B$ with entries $b_{ij}$. Notice that
this is an $n \times n$ matrix of cosine similarities between columns of $A$. 
Define a diagonal matrix $D$ with $d_{ii} = ||c_i||$.
Then we can
undo the cosine similarity normalization to obtain an estimate for $A^TA$ by using $DBD$. This
effectively uses the cosine similarities between columns of $A$ as an importance sampling scheme.
We have the following theorem:

\begin{theorem} \label{svdtheorem} Let $A$ be an $m \times n$ matrix with $m > n$.
If $\gamma = \Omega(n/\epsilon^2)$ and $D$ a diagonal matrix with entries $d_{ii} = ||c_i||$, then the matrix $B$ output by DIMSUM (Algorithms \ref{alg:discomapper} and \ref{alg:discoreducer}) satisfies,
$$\frac{||DBD - A^TA||_2}{||A^TA||_2} \leq \epsilon $$
 with probability at least $1/2$.
 \end{theorem}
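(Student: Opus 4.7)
The plan is to view $X := DBD - A^TA$ as a random matrix with independent, mean-zero entries, apply Latala's theorem (Theorem \ref{latala}) to bound $\E\|X\|_2$, and then convert the expectation bound into a probability bound via Markov's inequality.

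First I would write out $X_{ij}$ explicitly. For each row $r_k$ of $A$, let $Z_{k,ij}$ be the Bernoulli indicator of whether the pair $(i,j)$ was emitted for row $k$, with success probability $p_{ij} := \min(1,\gamma/(\|c_i\|\|c_j\|))$. When $p_{ij}=1$ the reducer returns the exact dot product, so $X_{ij}=0$; otherwise a direct inspection of the reducer's scaling gives
$$ X_{ij} \;=\; \sum_{k=1}^{m}\Bigl(\tfrac{Z_{k,ij}}{p_{ij}}-1\Bigr)\,a_{ki}a_{kj}. $$
Because the mapper's coin flips are independent across distinct pairs, the random variables $\{X_{ij}\}_{i\le j}$ are independent and centered, which is exactly the setting of Latala's theorem.

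Next I would bound the second and fourth moments of each entry. A direct variance computation gives $\E X_{ij}^2 \le p_{ij}^{-1}\sum_k a_{ki}^2 a_{kj}^2$, and Cauchy--Schwarz combined with $|a_{ki}|\le 1$ (so $a_{ki}^4\le a_{ki}^2$) yields $\sum_k a_{ki}^2 a_{kj}^2 \le \|c_i\|\,\|c_j\|$, hence
$$ \E X_{ij}^2 \;\le\; \frac{\|c_i\|^2\,\|c_j\|^2}{\gamma}. $$
The analogous moment expansion for a centered Bernoulli summand yields $\E X_{ij}^4 = O\bigl(\|c_i\|^4\|c_j\|^4/\gamma^2\bigr)$, the dominant contribution being the $3(\E X_{ij}^2)^2$ cross terms once $\gamma$ is at least a constant. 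Plugging into Latala gives, for the row-sum term,
$$ \max_i\Bigl(\sum_j \E X_{ij}^2\Bigr)^{1/2} \;\le\; \frac{\max_i\|c_i\|}{\sqrt{\gamma}}\Bigl(\sum_j\|c_j\|^2\Bigr)^{1/2} \;\le\; \sqrt{\tfrac{n}{\gamma}}\,\|A\|_2^2, $$
using $\max_i\|c_i\|\le\|A\|_2$ and $\|A\|_F^2\le n\|A\|_2^2$. The column-sum term is symmetric, and the fourth-root term is of the same order because $\sum_i\|c_i\|^4 \le \|A\|_2^2\,\|A\|_F^2 \le n\|A\|_2^4$. Choosing $\gamma = \Omega(n/\epsilon^2)$ therefore makes each of the three Latala terms at most $O(\epsilon)\|A^TA\|_2$, so $\E\|X\|_2 \le C\epsilon\|A^TA\|_2$. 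Markov's inequality then gives $\|X\|_2 \le 2C\epsilon\|A^TA\|_2$ with probability at least $1/2$, and $C$ can be absorbed into the hidden constant in the hypothesis on $\gamma$.

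The step I expect to be most delicate is the fourth-moment estimate: the inverse-probability scaling can produce factors as large as $1/p_{ij}^3$, and controlling this requires using $|a_{ki}|\le 1$ aggressively to collapse $a_{ki}^4$ factors back to $a_{ki}^2$ before applying Cauchy--Schwarz, so that $\sum_k a_{ki}^4 a_{kj}^4$ is still controlled by $\|c_i\|\|c_j\|$. A secondary bookkeeping point is that the case $p_{ij}=1$ must be handled separately throughout: those entries of $X$ vanish deterministically, so every moment bound should be written in terms of the actual sampling probability rather than the truncated $\gamma/(\|c_i\|\|c_j\|)$, lest one inadvertently inflate the estimates on the ``large-norm'' column pairs.
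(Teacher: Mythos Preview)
Your argument is correct and reaches the same conclusion with the same $\gamma = \Omega(n/\epsilon^2)$, but the route differs from the paper's in one structural way that is worth noting.

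The paper applies Latala's theorem to the \emph{normalized} matrix $B - D^{-1}A^TAD^{-1}$ rather than to $DBD - A^TA$. On that scale the second and fourth central moments of every entry are bounded \emph{uniformly} by $1/\gamma$ and $2/\gamma^2$, so the Latala sums become a pure count of entries and produce $\E\|B - D^{-1}A^TAD^{-1}\|_2 \le C_1\sqrt{n/\gamma}$ directly. The price is a final conversion step: one multiplies by $\|D\|^2 = \max_i\|c_i\|^2$ via submultiplicativity and then cancels against the lower bound $\|A^TA\|_2 \ge \max_i\|c_i\|^2$ (obtained by testing the quadratic form on a coordinate vector) to turn the absolute bound on $B$ into the relative bound on $DBD$.

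You instead work on the unnormalized scale from the start, so your entrywise moments carry the factors $\|c_i\|^2\|c_j\|^2$ and $\|c_i\|^4\|c_j\|^4$. You then absorb these inside the Latala sums using $\max_i\|c_i\| \le \|A\|_2$ and $\|A\|_F^2 \le n\|A\|_2^2$, which delivers the relative bound in one stroke without the submultiplicativity detour. Both approaches use $|a_{ki}|\le 1$ at the same place (collapsing fourth powers to squares), and both ultimately hinge on the inequality $\max_i\|c_i\|^2 \le \|A^TA\|_2$, just invoked at different stages. The paper's version yields cleaner intermediate constants; yours is a bit more direct and makes the dependence on the column-norm profile of $A$ explicit in the Latala input.
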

\begin{proof}

We define the indicator variable $X_{ijk}$ to take value $a_{ki} a_{kj}$ with probability $p_{ij} = \gamma \frac{1}{||c_i|| ||c_j||}$
 on the $k$'th call to DIMSUMMapper, and zero with probability $1-p_{ij}$. 
$$
X_{ijk} =
\left\{
	\begin{array}{ll}
		a_{kj} a_{kj}  & \mbox{with prob. }  p_{ij} \\
		0 & \mbox{with prob. }  1- p_{ij}
	\end{array}
\right.
 $$
Then we can write the entries of $B$ as 
 $$b_{ij} =   \frac{1}{\gamma} \sum_{k=1}^m X_{ijk}$$

Since we give relative error bounds and singular values scale trivially,
we can assume $A$ has all entries in $[0,1]$. 
i.e. any scaling of the input matrix will have the same relative error guarantee.
This assumption will be useful because we first prove an absolute error bound, then use
that to prove a relative error bound.
It should be clear from the definitions that in expectation
$$E[B] = D^{-1} A^TA D^{-1} \text{ and } E[DBD] = A^TA$$
With these definitions, 
we now move onto bounding $\E[ ||B - D^{-1} A^TA D^{-1}||]$. With the goal of invoking Latala's theorem,
we analyze $\E [(b_{ij} - E b_{ij})^2]$ and $\E [(b_{ij} - E b_{ij})^4]$.

 Now define $\#(i,j)$ as the number
 of dimensions in which $c_i$ and $c_j$ are \textit{both nonzero}, i.e. the number of $k$ for which $a_{ki} a_{kj}$ is nonzero, and further define
 $i \cap j$ as the set of indices for which $a_{ki} a_{kj}$ is nonzero.

Clearly, $\E [(b_{ij} - E b_{ij})^2]$ is the variance of $b_{ij}$, which is the sum of $\#(i,j)$ weighted indicator random variables. 
Thus we have 

$$\E [(b_{ij} - E b_{ij})^2] = \text{Var}[b_{ij}] = \frac{1}{\gamma^2} \sum_{k \in i \cap j} \text{Var}[X_{ijk}] $$
$$ = \frac{1}{\gamma^2} \sum_{k \in i \cap j} a_{ki}^2 a_{kj}^2 p_{ij}(1-p_{ij})$$
$$ \leq  \frac{1}{\gamma^2} \sum_{k \in i \cap j} a_{ki}^2 a_{kj}^2 p_{ij} $$
$$= \frac{1}{\gamma^2} \sum_{k \in i \cap j} a_{ki}^2 a_{kj}^2  \gamma \frac{1}{||c_i|| ||c_j||} $$

Now by the Arithmetic-Mean Geometric-Mean inequality,
$$ \leq \frac{1}{2 \gamma^2} \sum_{k \in i \cap j} a_{ki}^2 a_{kj}^2  \gamma \left(\frac{1}{||c_i||^2} + \frac{1}{||c_j||^2}\right) $$
$$ = \frac{1}{2 \gamma} \sum_{k \in i \cap j} a_{ki}^2 a_{kj}^2  \left(\frac{1}{||c_i||^2} + \frac{1}{||c_j||^2}\right) $$
$$ \leq \frac{1}{\gamma} \sum_{k \in i \cap j} a_{ki}^2 a_{kj}^2  \left(\frac{1}{||c_j||^2}\right) $$
$$ \leq \frac{1}{\gamma} \sum_{k \in i \cap j}    \frac{a_{kj}^2}{||c_j||^2} \leq \frac{1}{\gamma} $$

Thus we have $E [(b_{ij} - E b_{ij})^2] \leq \frac{1}{\gamma}$. It remains to bound the fourth central moment of $b_{ij}$. We use a  counting trick to achieve this bound:

$$\E [(b_{ij} - E b_{ij})^4] = \frac{1}{\gamma^4} \E \left[ \left(\sum_{k \in i \cap j} X_{ijk} - a_{ki} a_{kj}  p_{ij}  \right)^4\right]$$
$$= \frac{1}{\gamma^4} \E \left[ \sum_{q,r,s,t \in i \cap j} (X_{ijq} - a_{qi} a_{qj}  p_{ij})(X_{ijr} - a_{ri} a_{rj}  p_{ij})(X_{ijs} - a_{si} a_{sj}  p_{ij})(X_{ijt} - a_{ti} a_{tj}  p_{ij}) \right]$$
$$= \frac{1}{\gamma^4}  \sum_{q,r,s,t \in i \cap j} \E \left[ (X_{ijq} - a_{qi} a_{qj}  p_{ij})(X_{ijr} - a_{ri} a_{rj}  p_{ij})(X_{ijs} - a_{si} a_{sj}  p_{ij})(X_{ijt} - a_{ti} a_{tj}  p_{ij}) \right]$$
which effectively turns this into a counting problem. The terms in the sum on the last expression are 0 unless either $q=r=s=t$, which happens $\#(i,j)$ times, or there are two pairs of matching indices, which happens  ${\#(i,j) \choose 2} {4 \choose 2}$ times. Continuing, this gives us
$$= 	 \frac{1}{\gamma^4}\sum_{k \in i \cap j} \E[(X_{ijk} -  a_{ki} a_{kj}  p_{ij})^4]  +     \frac{1}{\gamma^4}\sum_{q,r \in i \cap j} \text{Var}[X_{ijq}] \text{Var}[X_{ijr}] $$
$$=  \frac{1}{\gamma^4}	\sum_{k \in i \cap j}  a_{ki}^4 a_{kj}^4  [p_{ij}^4(1-p_{ij}) + (1-p_{ij})^4p_{ij}]  $$
$$+    \frac{1}{\gamma^4} \sum_{q,r \in i \cap j} a_{qi}^2 a_{qj}^2 p_{ij}(1-p_{ij}) a_{ri}^2 a_{rj}^2 p_{ij}(1-p_{ij})$$
$$ \leq  \frac{1}{\gamma^4}	\sum_{k \in i \cap j}  a_{ki}^4 a_{kj}^4  p_{ij} +    \frac{1}{\gamma^4} \sum_{q,r \in i \cap j} a_{qi}^2 a_{qj}^2  a_{ri}^2 a_{rj}^2 p_{ij}^2$$
$$ =  \frac{1}{\gamma^3}  \frac{1}{||c_i|| ||c_j||}\sum_{k \in i \cap j}  a_{ki}^4 a_{kj}^4  +    \frac{1}{\gamma^2}   \frac{1}{||c_i||^2 ||c_j||^2} \sum_{q,r \in i \cap j} a_{qi}^2 a_{qj}^2  a_{ri}^2 a_{rj}^2$$
by the Arithmetic-Mean Geometric-Mean inequality,
$$ \leq  \frac{1}{2\gamma^3}  (\frac{1}{||c_i||^2}+ \frac{1}{||c_j||^2})\sum_{k \in i \cap j}  a_{ki}^4 a_{kj}^4  +    \frac{1}{\gamma^2}   \frac{1}{||c_i||^2 ||c_j||^2} \sum_{q,r \in i \cap j} a_{qi}^2 a_{qj}^2  a_{ri}^2 a_{rj}^2$$
and since entries $a_{ij} \in [0,1]$,
$$ \leq  \frac{1}{2\gamma^3}  (\frac{1}{||c_i||^2}+ \frac{1}{||c_j||^2})\sum_{k \in i \cap j}  a_{ki}^2 a_{kj}^2  +    \frac{1}{\gamma^2}   \frac{1}{||c_i||^2 ||c_j||^2} \sum_{q,r \in i \cap j} a_{qi}^2 a_{rj}^2$$
$$ \leq  \frac{1}{\gamma^3}  \frac{1}{||c_i||^2}\sum_{k \in i \cap j}  a_{ki}^2  +    \frac{1}{\gamma^2}   \frac{1}{||c_i||^2 ||c_j||^2} \sum_{q,r \in i \cap j} a_{qi}^2 a_{rj}^2$$
$$ \leq  \frac{1}{\gamma^3} +    \frac{1}{\gamma^2}   $$
for $\gamma \geq 1$,
$$ \leq  \frac{2}{\gamma^2}  $$

Thus we have that $\E [(b_{ij} - E b_{ij})^4] \leq \frac{2}{\gamma^2}$, and from the above we have $\E [(b_{ij} - E b_{ij})^2] \leq \frac{1}{\gamma}$. Plugging
these into Theorem \ref{latala}, we can bound the \textit{absolute error} between $B$ and $D^{-1} A^TA D^{-1}$,

$$\E[ ||B - D^{-1} A^TA D^{-1}||] \leq C_0 [  \max_i \left(\sum_j \E [(b_{ij} - E b_{ij})^2] \right)^{1/2} $$
$$ + \max_j \left(\sum_i \E [(b_{ij} - E b_{ij})^2]  \right)^{1/2}
      + \left(\sum_{i,j} \E [(b_{ij} - E b_{ij})^4]  \right)^{1/4}  ]$$
$$\leq C_0 [ \left( \frac{n}{\gamma} \right)^{1/2} + \left( \frac{n}{\gamma} \right)^{1/2}
      + \left( \frac{2n^2}{\gamma^2}  \right)^{1/4}  ]$$
$$\leq C_1 \left( \frac{n}{\gamma} \right)^{1/2} $$
where $C_0$ and $C_1$ are absolute constants. Thus we have that $$\E[ ||B - D^{-1} A^TA D^{-1}||] \leq  C_1 \left( \frac{n}{\gamma} \right)^{1/2}$$

Setting $\gamma = 4 C_1^2 \frac{n}{\epsilon^2}$, gives $$E[ ||B - D^{-1} A^TA D^{-1}||] \leq \epsilon/2$$
Thus by the Markov inequality we have with probability at least $1/2$, 
$$ ||B - D^{-1} A^TA D^{-1}|| \leq \epsilon$$

Which gives us an absolute error bound between $B$ and $D^{-1} A^TA D^{-1}$. It remains to get a relative error bound between
$DBD$ and  $A^TA$,

$$\frac{||DBD - A^TA||}{||A^TA||} =  \frac{||D(B - D^{-1} A^TA D^{-1})D||}{||A^TA||}$$
by the submultiplicative property of the spectral norm,
$$\leq \frac{||D||^2 ||B - D^{-1} A^TA D^{-1}||}{||A^TA||}$$
Now since $D$ is a diagonal matrix with positive entries, its spectral norm is its largest entry, i.e. the largest column magnitude, call it $c_*$,
$$\leq \frac{ c_*^2 ||B - D^{-1} A^TA D^{-1}||}{||A^TA||}$$
Now we use another property of the spectral norm to lowerbound $||A^TA||$,
 $$||A^TA|| = \max_{x,y \in \mathbb{R}^n} \frac{x^TA^TAy}{||x||||y||}$$
 Setting $x,y$ to be indicator vectors to pick out the $i$'th diagonal entry of $A^TA$, we have that $||A^TA|| \geq c_*^2$ since $c_*^2$ is
 some entry in the diagonal of $A^TA$. In addition to allowing us to bound the fourth central moment, this is yet
 another reason why we picked the sampling probabilities in Algorithm \ref{alg:discomapper}.
 Finally, continuing from above armed with this lower bound,

$$\frac{||DBD - A^TA||}{||A^TA||} \leq \frac{ c_*^2 ||B - D^{-1} A^TA D^{-1}||}{||A^TA||}$$
$$\leq \frac{ c_*^2 \epsilon}{||A^TA||}$$
$$\leq \frac{ c_*^2 \epsilon}{c_*^2}$$
$$= \epsilon$$
with probability at least $1/2$.

\end{proof}

Although we had to set $\gamma = \Omega(n/\epsilon^2)$ to estimate
the singular values, if instead of the singular values
we are interested in individual entries of $A^TA$ that are large, we can get away setting
$\gamma$ significantly smaller, and thus reducing shuffle size. In particular if two columns have
high cosine similarity, we can estimate the corresponding entry in $A^TA$ with much less
computation. Here we define cosine
similarity as the normalized dot product $$\cos(c_i, c_j) = \frac{c_i^Tc_j}{||c_i||||c_j||}$$

\begin{theorem} \label{cosinecorrect}
Let $A$ be an $m\times n$ matrix with entries in $[0,1]$. For any two columns $c_i$ and $c_j$ having $\cos(c_i,c_j) \geq \epsilon$, let 
 $B$ be the output of DIMSUM with entries $b_{ij} =   \frac{1}{\gamma} \sum_{k=1}^m X_{ijk}$ with 
 $X_{ijk}$ as defined in Theorem \ref{svdtheorem}.
Now if $\gamma \geq \alpha/\epsilon$, then we have,
$$\Pr \left[ ||c_i||||c_j|| b_{ij} > (1+\delta)[A^TA]_{ij} \right] \leq \left(\frac{e^\delta}{(1+\delta)^{(1+\delta)}}\right)^\alpha $$
and 
$$\Pr \left[ ||c_i||||c_j|| b_{i,j} < (1-\delta)[A^TA]_{ij} \right]  < \exp(-\alpha\delta^2/2)$$
\end{theorem}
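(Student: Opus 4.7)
The plan is to recognize that $\|c_i\|\|c_j\| b_{ij}$ is, up to a fixed rescaling, a sum of independent $[0,1]$-valued random variables, and then apply the standard multiplicative Chernoff bound. Concretely, since the entries of $A$ lie in $[0,1]$, each $X_{ijk}$ takes values in $\{0, a_{ki}a_{kj}\} \subseteq [0,1]$, so $Z := \sum_{k=1}^m X_{ijk}$ is a sum of independent $[0,1]$ random variables, and we have the identity $\|c_i\|\|c_j\|\,b_{ij} = \tfrac{\|c_i\|\|c_j\|}{\gamma} Z$.

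Next I would compute the mean. Using $p_{ij} = \gamma/(\|c_i\|\|c_j\|)$,
\[
\mu := \E[Z] \;=\; p_{ij}\sum_{k\in i\cap j} a_{ki}a_{kj} \;=\; \frac{\gamma}{\|c_i\|\|c_j\|}\,[A^TA]_{ij} \;=\; \gamma\cos(c_i,c_j),
\]
so that $\E[\|c_i\|\|c_j\|\,b_{ij}] = [A^TA]_{ij}$ and the hypothesis $\cos(c_i,c_j)\ge\epsilon$ together with $\gamma\ge\alpha/\epsilon$ give $\mu \ge \alpha$. The events to bound rewrite cleanly as
\[
\{\|c_i\|\|c_j\|\,b_{ij} > (1+\delta)[A^TA]_{ij}\} \;=\; \{Z > (1+\delta)\mu\},
\]
and analogously for the lower tail.

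Now I would invoke the multiplicative Chernoff inequality for sums of independent $[0,1]$-valued random variables (the standard MGF argument via $\E[e^{tX_k}] \le 1 - p_k + p_k e^t$ from convexity goes through verbatim when $X_k\in[0,1]$ has mean $p_k$, since $e^{tx}$ is convex in $x$). This yields
\[
\Pr[Z > (1+\delta)\mu] \le \left(\frac{e^\delta}{(1+\delta)^{1+\delta}}\right)^{\!\mu},
\qquad
\Pr[Z < (1-\delta)\mu] \le \exp(-\mu\delta^2/2).
\]
Both right-hand sides are base-less-than-one and decreasing in the exponent $\mu$, so using $\mu \ge \alpha$ I can replace $\mu$ by $\alpha$ to obtain exactly the two claimed bounds.

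The only substantive point is justifying that the Bernoulli-form Chernoff bound applies to $[0,1]$-valued summands; everything else is algebraic bookkeeping and the monotonicity-in-$\mu$ substitution. There is no real obstacle beyond confirming that $a_{ki}a_{kj} \le 1$ (which is where the normalization of $A$ to $[0,1]$ matters) so that each $X_{ijk}$ genuinely lies in $[0,1]$ and the standard Chernoff tails are available.
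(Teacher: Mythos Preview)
Your proposal is correct and follows essentially the same approach as the paper: compute $\mu=\E\big[\sum_k X_{ijk}\big]=\gamma\cos(c_i,c_j)\ge\alpha$, rewrite the tail events for $\|c_i\|\|c_j\|b_{ij}$ as tail events for $\sum_k X_{ijk}$, apply the multiplicative Chernoff bound, and then use $\mu\ge\alpha$ to replace the exponent. Your write-up is in fact slightly more careful than the paper's, since you explicitly justify why the Chernoff bound applies to the $[0,1]$-valued (not merely $\{0,1\}$-valued) summands $X_{ijk}$ via the convexity argument.
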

\begin{proof}
We use $||c_i||||c_j|| b_{i,j}$ as the estimator for $[A^TA]_{ij}$. Note that 
$$\mu_{ij} = \E [\sum_{k=1}^m X_{ijk}] =   \gamma \frac{c_i^T c_j}{||c_i|| ||c_j||} =  \gamma \cos(x,y) \geq \alpha$$ 
Thus by the multiplicative form of the Chernoff bound,
$$\Pr \left[ ||c_i||||c_j|| b_{ij} > (1+\delta)[A^TA]_{ij}  \right]
=\Pr \left[ \gamma \frac{||c_i||||c_j||}{||c_i||||c_j||} b_{ij} > \gamma (1+\delta)\frac{[A^TA]_{ij}}{||c_i||||c_j||}  \right]$$
$$=\Pr \left[ \sum_{k=1}^m X_{ijk} > (1+\delta) \E[\sum_{k=1}^m X_{ijk}]  \right]
\leq \left(\frac{e^\delta}{(1+\delta)^{(1+\delta)}}\right)^\alpha$$

Similarly, by the other side of the multiplicative Chernoff bound, we have
$$\Pr \left[ ||c_i||||c_j|| b_{ij} < (1+\delta)[A^TA]_{ij}  \right]
=\Pr \left[ \gamma \frac{||c_i||||c_j||}{||c_i||||c_j||} b_{ij} < \gamma (1+\delta)\frac{[A^TA]_{ij}}{||c_i||||c_j||}  \right]$$
$$=\Pr \left[ \sum_{k=1}^m X_{ijk} < (1+\delta) \E[\sum_{k=1}^m X_{ijk}]  \right]$$
$$ < \exp(-\mu_{ij}\delta^2/2) \leq \exp(-\alpha \delta^2/2)$$
\end{proof}

\section{Shuffle Size}

Define $H$ as the smallest nonzero entry of $A$ in magnitude, after the entries of $A$ have been scaled
to be in $[0,1]$. For example when $A$ has entries in $\{0,1\}$, we have $H=1$. 

\begin{theorem} \label{smallshuffle}
Let $A$ be an $m \times n$ sparse matrix with
at most $L$ nonzeros per row. The expected shuffle size for DIMSUMMapper is $O(n L \gamma / H^2)$.
\end{theorem}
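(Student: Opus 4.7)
The plan is to compute the expectation of the shuffle size directly and then bound it with elementary inequalities. The shuffle size is a sum of Bernoulli indicator variables, one per pair $(a_{ij},a_{ik})$ in each row $r_i$, each with emit-probability $\min(1,\gamma/(\|c_j\|\,\|c_k\|))$. Taking expectations and using $\min(1,x)\leq x$, the expected shuffle size is at most
$$\gamma \sum_{i=1}^m \sum_{j,k\in S_i} \frac{1}{\|c_j\|\,\|c_k\|},$$
where $S_i=\{j : a_{ij}\neq 0\}$ has $|S_i|\leq L$.

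The first key step is to decouple $j$ and $k$. I would apply the same AM-GM trick used in the proof of Theorem \ref{svdtheorem}, namely $\frac{1}{\|c_j\|\,\|c_k\|}\leq \tfrac{1}{2}\bigl(\tfrac{1}{\|c_j\|^2}+\tfrac{1}{\|c_k\|^2}\bigr)$. By symmetry of the inner sum in $j,k$, this collapses to
$$\gamma \sum_{i=1}^m \sum_{j,k\in S_i} \frac{1}{\|c_j\|^2} \;\leq\; \gamma L \sum_{i=1}^m \sum_{j\in S_i} \frac{1}{\|c_j\|^2},$$
where I used $|S_i|\leq L$ to bound the sum over $k$.

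The second key step is to swap the order of summation over $i$ and $j$. For a fixed $j$, the indices $i$ with $j\in S_i$ are precisely the rows in which column $c_j$ has a nonzero entry, so the count is $\mathrm{nnz}(c_j)$. This yields
$$\gamma L \sum_{j=1}^n \frac{\mathrm{nnz}(c_j)}{\|c_j\|^2}.$$

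Finally, I would use the definition of $H$: since every nonzero entry of $A$ has magnitude at least $H$, we get $\|c_j\|^2 = \sum_{i : a_{ij}\neq 0} a_{ij}^2 \geq H^2\,\mathrm{nnz}(c_j)$, hence $\mathrm{nnz}(c_j)/\|c_j\|^2 \leq 1/H^2$. Substituting gives the stated bound $O(nL\gamma/H^2)$. There is no real obstacle here: the content is the AM-GM decoupling together with the swap of summation order, both already in the toolkit of the earlier proof; the role of $H$ is simply to convert the column-support count $\mathrm{nnz}(c_j)$ into a bound in terms of $\|c_j\|^2$.
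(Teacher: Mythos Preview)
Your proposal is correct and follows essentially the same approach as the paper: both apply the AM--GM inequality $\tfrac{1}{\|c_j\|\|c_k\|}\le\tfrac12\bigl(\tfrac{1}{\|c_j\|^2}+\tfrac{1}{\|c_k\|^2}\bigr)$ to decouple the two columns, then combine the row-sparsity bound $L$ with the observation that $\mathrm{nnz}(c_j)\le \|c_j\|^2/H^2$ to finish. The only cosmetic difference is that the paper organizes the outer sum by column pairs and bounds $\sum_j \#(c_i,c_j)\le L\,\|c_i\|^2/H^2$ in one step, whereas you organize by rows first and make the swap of summation explicit; the underlying counting is identical.
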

\begin{proof}

Define $\#(c_i,c_j)$ as the number
 of dimensions in which $c_i$ and $c_j$ are \textit{both nonzero}, i.e. number of $k$ for which $a_{ki} a_{kj}$ is nonzero. 

The expected contribution from each pair of columns will constitute the shuffle size:
$$\sum_{i=1}^{n} \sum_{j=i+1}^n \sum_{k=1}^{\#(c_i, c_j)} \text{Pr}[\text{DIMSUMMapper}(c_i, c_j)] $$
$$= \sum_{i=1}^{n} \sum_{j=i+1}^n \#(c_i, c_j) \text{Pr}[\text{DIMSUMMapper}(c_i, c_j)]  $$
$$= \sum_{i=1}^{n} \sum_{j=i+1}^n \gamma \frac{\#(c_i, c_j)}{||c_i|| ||c_j||} $$
By the Arithmetic-Mean Geometric-Mean inequality,
$$ \leq \frac{\gamma}{2} \sum_{i=1}^{n} \sum_{j=i+1}^n \#(c_i, c_j)( \frac{1}{||c_i||^2} + \frac{1}{||c_j||^2}) $$
$$\leq \gamma \sum_{i=1}^{n} \frac{1}{||c_i||^2} \sum_{j=1}^n \#(c_i, c_j)$$
$$\leq \gamma \sum_{i=1}^{n} \frac{1}{||c_i||^2} L ||c_i||^2 / H^2 = \gamma Ln/H^2 $$

The first inequality holds because of the Arithmetic-Mean Geometric-Mean inequality applied to $\{1/||c_i||, 1/||c_j||\}$. The
last inequality holds because $c_i$ can co-occur with at most $||c_i||^2 L / H^2$ other columns. 
It is easy to see via Chernoff bounds that the above shuffle size is obtained with high probability.

\end{proof}

\begin{theorem} \label{smallshuffle}
Let $A$ be an $m \times n$ sparse matrix $A$ with at most $L$ nonzeros per row.
The  shuffle size for any algorithm computing those entries of $A^TA$ for which $\cos(i,j)\geq \epsilon$
 is at least $\Omega(n L)$.
\end{theorem}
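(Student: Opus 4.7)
The plan is to exhibit a hard instance $A$ on which the set of column pairs with cosine similarity at least $\epsilon$ already has size $\Omega(nL)$, and then argue that in the MapReduce model the shuffle size must be at least the number of distinct output keys the algorithm is required to produce. Since each entry $[A^TA]_{ij}$ is associated with a distinct reducer key $(c_i,c_j)$ and a reducer cannot emit a value under a key it never received, the number of such required outputs is a valid lower bound on the shuffle size.

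For the hard instance I would take $m = n$ (padding if necessary) and build a block-diagonal matrix with $n/L$ blocks, where each block is an $L \times L$ all-ones submatrix. Every row then has exactly $L$ nonzeros, satisfying the sparsity hypothesis, and every entry lies in $[0,1]$. Within a single block, all $L$ columns have identical support, so $\cos(c_i,c_j) = 1 \geq \epsilon$ for each of the $\binom{L}{2}$ pairs in the block. Summing over the $n/L$ blocks gives $(n/L)\binom{L}{2} = \Omega(nL)$ column pairs that the algorithm must compute and emit.

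Combining the two observations yields the desired $\Omega(nL)$ bound. The only subtle point is the reduction from ``number of required outputs'' to ``shuffle size''; I would handle it by adopting the standard convention, consistent with Algorithms \ref{alg:discomapper} and \ref{alg:discoreducer}, that each output entry of $A^TA$ is produced by a reducer keyed on the corresponding column pair, so that distinct outputs force distinct shuffle keys and hence at least one mapper emission each. The main potential obstacle is precisely pinning down this model assumption, since in principle a single reducer call could produce multiple outputs; I would address it by either (i) restricting to algorithms whose output is keyed by column-pair, as is the case for every natural $A^TA$ computation including the naive and DIMSUM algorithms of this paper, or (ii) an information-theoretic variant observing that the mappers see only local rows and therefore cannot route $\Omega(nL)$ independent output bits to the right reducers without at least that many emissions.
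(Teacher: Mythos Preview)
Your proposal is correct and essentially identical to the paper's proof: the paper constructs exactly the same block-diagonal instance (described as $n/L$ groups of $L$ columns, each group's indicator row repeated $L$ times) and counts $(n/L)\binom{L}{2}=\Omega(nL)$ pairs with cosine similarity $1$, then invokes the same ``must at least output them'' argument. Your additional discussion of the model assumption linking output size to shuffle size is, if anything, more careful than the paper, which simply asserts the reduction in one sentence.
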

\begin{proof}

To see the lowerbound, we construct a dataset consisting of $n/L$ distinct rows of length $L$,
furthermore each row is duplicated $L$ times. To construct this dataset, consider 
grouping the columns into $n/L$ groups, each group containing $L$ columns. A row is associated
with every group, consisting of all the columns in the group. This row is then repeated $L$ times.
In each group, it is trivial to check that all pairs of columns have cosine similarity exactly 1. There are ${L \choose 2}$
pairs for each group and there are $n/L$ groups, making for a total of
$(n/L) {L \choose 2} = \Omega(nL)$ pairs with similarity 1, and thus also at least $\epsilon$.
Since any algorithm that purports to accurately calculate highly-similar pairs must at least \textit{output} them,
and there are $\Omega(nL)$ such pairs, we have the lower bound.
\end{proof}

\begin{theorem} \label{smallreducekey}
Let $A$ be an $m \times n$ matrix with non-negative entries.
The expected number of values mapped to a single key  by DIMSUMMapper is at most $\gamma / H^2$.
\end{theorem}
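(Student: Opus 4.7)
The plan is to fix an arbitrary reduce-key $(c_i, c_j)$ and explicitly bound the expected number of map emissions sent to it. Each row $r_k$ contributes an emission to this key only when both $a_{ki}$ and $a_{kj}$ are nonzero, and in that case it does so with probability $\min(1, \gamma/(\|c_i\|\|c_j\|))$. Writing $\#(c_i,c_j)$ for the number of rows where both coordinates are nonzero (as defined in the previous shuffle-size theorem), the expected load on key $(c_i,c_j)$ is exactly
\[
\#(c_i,c_j)\cdot \min\!\left(1,\frac{\gamma}{\|c_i\|\|c_j\|}\right) \;\le\; \frac{\gamma\,\#(c_i,c_j)}{\|c_i\|\|c_j\|},
\]
where the inequality absorbs the clipping at $1$ (dropping the $\min$ only weakens the bound, since in the clipped regime $\gamma/(\|c_i\|\|c_j\|)\ge 1$). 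So the whole problem reduces to proving the deterministic inequality $\#(c_i,c_j)\le \|c_i\|\|c_j\|/H^2$.

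First I would exploit the definition of $H$: because every nonzero entry of $A$ has magnitude at least $H$ and lies in $[0,1]$, the squared column norm satisfies $\|c_i\|^2=\sum_k a_{ki}^2\ge H^2\cdot\mathrm{nnz}(c_i)$, and likewise for $c_j$. Hence $\mathrm{nnz}(c_i)\le \|c_i\|^2/H^2$. Next, since $\#(c_i,c_j)$ counts rows where \emph{both} columns are nonzero, it is at most $\min(\mathrm{nnz}(c_i),\mathrm{nnz}(c_j))\le \sqrt{\mathrm{nnz}(c_i)\,\mathrm{nnz}(c_j)}$. Combining these yields
\[
\#(c_i,c_j)\;\le\;\sqrt{\mathrm{nnz}(c_i)\,\mathrm{nnz}(c_j)}\;\le\;\frac{\|c_i\|\|c_j\|}{H^2}.
\]
Substituting back into the expected-load expression gives the desired bound $\gamma/H^2$, uniformly over the choice of key $(c_i,c_j)$.

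The argument is essentially one line once the right quantity is isolated, so there is no real obstacle; the only thing to be careful about is handling the $\min(1,\cdot)$ clip cleanly and noting that non-negativity of $A$'s entries is what allows the pair-count $\#(c_i,c_j)$ to be bounded by the column supports (sign cancellations never reduce $\#(c_i,c_j)$, only the resulting inner product, so the same chain of inequalities works). One sentence at the end should note that this bound on the expectation of course also applies to the \emph{expected} reduce-key complexity as claimed in the introduction; a Chernoff concentration statement for the high-probability version mirrors the remark at the end of Theorem \ref{smallshuffle}.
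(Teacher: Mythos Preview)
Your argument is correct, but it takes a different route from the paper's. The paper gives a one-line ``output-based'' argument: the reducer's output $b_{ij}$ is (in expectation) the cosine similarity, hence lies in $[0,1]$; since this output equals the sum of the emitted values divided by a normalizer that is at most $\gamma$, the expected sum of emitted values is at most $\gamma$; and since every emitted value $a_{ki}a_{kj}$ is at least $H^2$, the expected number of emissions is at most $\gamma/H^2$.

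You instead compute the expected load directly as $\#(c_i,c_j)\cdot p_{ij}$ and then prove the combinatorial bound $\#(c_i,c_j)\le \|c_i\|\|c_j\|/H^2$ via $\#(c_i,c_j)\le\sqrt{\mathrm{nnz}(c_i)\,\mathrm{nnz}(c_j)}$ and $\mathrm{nnz}(c_\ell)\le\|c_\ell\|^2/H^2$. This is equally valid and has a small bonus: your chain of inequalities uses only $|a_{k\ell}|\ge H$ and never needs the products $a_{ki}a_{kj}$ to be positive, so in fact your proof goes through without the non-negativity hypothesis on $A$. The paper's argument, by contrast, genuinely needs non-negativity twice---once so that $\cos(c_i,c_j)\le 1$, and once so that each summand $a_{ki}a_{kj}\ge H^2$ rather than merely $|a_{ki}a_{kj}|\ge H^2$. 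Your closing parenthetical about where non-negativity enters is therefore slightly off: it is not needed at all in your version, which is a mild strengthening.
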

\begin{proof}
Note that the output of DIMSUMReducer is a number between 0 and 1. Since this is obtained
by normalizing the sum of all values reduced to the key by at most $\gamma$, and all summands are at least
$H^2$, we get that the number of summands is at most $\gamma/H^2$.
\end{proof}

\section{Reducing Computation}
In DIMSUMMapper, it is required to generate $L \choose 2$ random numbers for each row, which
doesn't cause communication between machines, but does require computation. We can reduce this
computation by moving the random number generation as in Algorithm \ref{alg:v2mapper}, which uses the summation reducer.
However, 
in Algorithm \ref{alg:v2mapper} it is no longer true that the $b_{ij}$ are pairwise independent, and thus
an analog of Theorem \ref{svdtheorem} does not hold for Algorithm \ref{alg:v2mapper}. 
However, Theorem \ref{cosinecorrect} does hold, as it does
not require the $b_{ij}$ to be be pairwise independent, and 
so when cosine similarities are sought, this is a useful modification.

\algsetup{indent=2em}
\begin{algorithm}[H]
\caption{LeanDIMSUMMapper$(r_i)$}\label{alg:v2mapper}
\begin{algorithmic} [40]
\FOR{all  $a_{ij}$ in $r_i$}
	\STATE With probability $\min\left(1, \frac{\sqrt{\gamma}}{||c_j||}\right)$\\
\FOR{all  $a_{ik}$ in $r_i$}
	\STATE With probability $\min\left(1, \frac{\sqrt{\gamma}}{ ||c_k||}\right)$\\
	emit $(b_{jk} \rightarrow \frac{a_{ij} a_{ik}}{ \min(\sqrt{\gamma}, ||c_j||) \min(\sqrt{\gamma}, ||c_k||)} )$
\ENDFOR
\ENDFOR
\end{algorithmic}
\end{algorithm}

\section{Experiments and Open Source Code}
We run DIMSUM daily on a production-scale ads dataset at Twitter \cite{blogpostdimsum}.
Upon replacing the traditional cosine similarity computation in late June 2014, 
we observed 40\% improvement in several performance measures, plotted in Figure \ref{exps}.
The y-axis ranges from 0 to hundred of terabytes, where the exact amount is kept confidential.

\begin{figure}[h]
\caption{\label{exps} DIMSUM turned on in late June. The y-axis ranges from 0 bytes to hundreds of terabytes.}
\centering
\includegraphics[width=10cm]{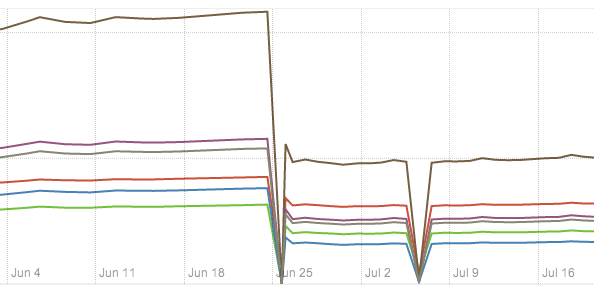} \includegraphics[width=4cm]{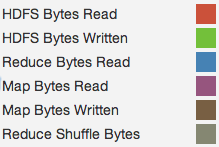}
\end{figure}

We have contributed an implementation of DIMSUM to two open source projects: Scalding and Spark \cite{sparkpaper}.
The Spark implementation is widely distributed by many commercial vendors that
package Spark with their industrial cluster installations.
\begin{itemize}
\item Spark github pull-request: \url{https://github.com/apache/spark/pull/1778}
\item Scalding github pull-request: \url{https://github.com/twitter/scalding/pull/833}
\end{itemize}

\section{Conclusions and Future Directions}

We presented the DIMSUM algorithm to compute
$A^TA$ for an $m \times n$ matrix $A$ with $m > n$.
All of our results are provably independent of the dimension $m$, meaning
that apart from the initial cost of trivially reading in the data, all subsequent
operations are independent of the dimension, the dimension can thus be very large.

Although we used $A^TA$ in the context of computing singular values, there are likely
other linear algebraic quantities that can benefit from having a provably efficient and accurate
MapReduce implementation of $A^TA$. For example if
one wishes to use the estimate for $A^TA$  in solving 
the normal equations in the ubiquitous least-squares problem
$$A^TA x = A^Ty$$
 then the guarantee given by Theorem
\ref{svdtheorem} gives some handle on the problem, although a concrete
error bound is left for future work.

\section{Acknowledgements}

We thank the Twitter Personalization and Recommender systems team for allowing
us to use production data from the live Twitter site for experiments (not reported), and Kevin Lin 
for the implementation in the Twitter Ads team.
We also thank Jason Lee, Yuekai Sun, and Ernest Ryu
from Stanford ICME for  valuable discussions. 
Finally we thank the Stanford student group: Computational Consulting
and all its members for their help. 
\newpage
\bibliographystyle{plain}
\bibliography{dimsum}

\begin{thebibliography}{10}

\bibitem{achlioptas}
Dimitris Achlioptas and Frank McSherry.
\newblock Fast computation of low rank matrix approximations.
\newblock In {\em Proceedings of the thirty-third annual ACM symposium on
  Theory of computing}, pages 611--618. ACM, 2001.

\bibitem{arora}
Sanjeev Arora, Elad Hazan, and Satyen Kale.
\newblock A fast random sampling algorithm for sparsifying matrices.
\newblock {\em Approximation, Randomization, and Combinatorial Optimization.
  Algorithms and Techniques}, pages 272--279, 2006.

\bibitem{austin}
Austin~R Benson, David~F Gleich, and James Demmel.
\newblock Direct qr factorizations for tall-and-skinny matrices in mapreduce
  architectures.
\newblock {\em arXiv preprint arXiv:1301.1071}, 2013.

\bibitem{blogpostdimsum}
Reza Bosagh~Zadeh.
\newblock Twitter engineering blog: All-pairs similarity via dimsum.
\newblock {\em Twitter Engineering Blog}, 2014.

\bibitem{review11}
Matthew Brand.
\newblock Fast low-rank modifications of the thin singular value decomposition.
\newblock {\em Linear algebra and its applications}, 415(1):20--30, 2006.

\bibitem{review12}
Moody~T Chu, Robert~E Funderlic, and Robert~J Plemmons.
\newblock Structured low rank approximation.
\newblock {\em Linear algebra and its applications}, 366:157--172, 2003.

\bibitem{streaminglinear}
Kenneth~L Clarkson and David~P Woodruff.
\newblock Numerical linear algebra in the streaming model.
\newblock In {\em Proceedings of the 41st annual ACM symposium on Theory of
  computing}, pages 205--214. ACM, 2009.

\bibitem{review9}
Kenneth~L Clarkson and David~P Woodruff.
\newblock Low rank approximation and regression in input sparsity time.
\newblock {\em arXiv preprint arXiv:1207.6365}, 2012.

\bibitem{mapreduce}
J.~Dean and S.~Ghemawat.
\newblock {MapReduce: Simplified data processing on large clusters}.
\newblock {\em Communications of the ACM}, 51(1):107--113, 2008.

\bibitem{adaptive}
Amit Deshpande and Santosh Vempala.
\newblock Adaptive sampling and fast low-rank matrix approximation.
\newblock {\em Approximation, Randomization, and Combinatorial Optimization.
  Algorithms and Techniques}, pages 292--303, 2006.

\bibitem{review4}
Petros Drineas, Eleni Drinea, and Patrick Huggins.
\newblock An experimental evaluation of a monte-carlo algorithm for singular
  value decomposition.
\newblock {\em Advances in Informatics}, pages 279--296, 2003.

\bibitem{review7}
Petros Drineas, Alan Frieze, Ravi Kannan, Santosh Vempala, and V~Vinay.
\newblock Clustering large graphs via the singular value decomposition.
\newblock {\em Machine learning}, 56(1):9--33, 2004.

\bibitem{review1}
Petros Drineas, Ravi Kannan, and Michael~W Mahoney.
\newblock Fast monte carlo algorithms for matrices i: Approximating matrix
  multiplication.
\newblock {\em SIAM Journal on Computing}, 36(1):132--157, 2006.

\bibitem{review2}
Petros Drineas, Ravi Kannan, and Michael~W Mahoney.
\newblock Fast monte carlo algorithms for matrices ii: Computing a low-rank
  approximation to a matrix.
\newblock {\em SIAM Journal on Computing}, 36(1):158--183, 2006.

\bibitem{review3}
Petros Drineas, Ravi Kannan, and Michael~W Mahoney.
\newblock Fast monte carlo algorithms for matrices iii: Computing a compressed
  approximate matrix decomposition.
\newblock {\em SIAM Journal on Computing}, 36(1):184--206, 2006.

\bibitem{review5}
Petros Drineas, Malik Magdon-Ismail, Michael~W Mahoney, and David~P Woodruff.
\newblock Fast approximation of matrix coherence and statistical leverage.
\newblock {\em arXiv preprint arXiv:1109.3843}, 2011.

\bibitem{review6}
Petros Drineas, Michael Mahoney, and S~Muthukrishnan.
\newblock Subspace sampling and relative-error matrix approximation:
  Column-based methods.
\newblock {\em Approximation, Randomization, and Combinatorial Optimization.
  Algorithms and Techniques}, pages 316--326, 2006.

\bibitem{review8}
Petros Drineas, Michael~W Mahoney, S~Muthukrishnan, and Tam{\'a}s Sarl{\'o}s.
\newblock Faster least squares approximation.
\newblock {\em Numerische Mathematik}, 117(2):219--249, 2011.

\bibitem{frieze1}
Alan Frieze, Ravi Kannan, and Santosh Vempala.
\newblock Fast monte-carlo algorithms for finding low-rank approximations.
\newblock {\em Journal of the ACM (JACM)}, 51(6):1025--1041, 2004.

\bibitem{ashishperf}
Ashish Goel and Kamesh Munagala.
\newblock Complexity measures for map-reduce, and comparison to parallel
  computing.
\newblock {\em Manuscript}, 2012.

\bibitem{wtfpaper}
Pankaj Gupta, Ashish Goel, Jimmy Lin, Aneesh Sharma, Dong Wang, and Reza Zadeh.
\newblock Wtf: The who to follow service at twitter.
\newblock {\em The WWW 2013 Conference}, 2013.

\bibitem{review13}
Ravi Kannan.
\newblock Fast monte-carlo algorithms for approximate matrix multiplication.
\newblock In {\em Proceedings/42nd IEEE Symposium on Foundations of Computer
  Science: October 14-17, 2001, Las Vegas, Nevada, USA;[FOCS 2001].}, page 452.
  IEEE Computer Society, 2001.

\bibitem{latala}
Rafal Latala.
\newblock Some estimates of norms of random matrices.
\newblock {\em Proceedings of the American Mathematical Society},
  133(5):1273--1282, 2005.

\bibitem{muthukrishnan}
S~Muthukrishnan.
\newblock {\em Data streams: Algorithms and applications}.
\newblock Now Publishers Inc, 2005.

\bibitem{sarlos}
Tamas Sarlos.
\newblock Improved approximation algorithms for large matrices via random
  projections.
\newblock In {\em Foundations of Computer Science, 2006. FOCS'06. 47th Annual
  IEEE Symposium on}, pages 143--152. IEEE, 2006.

\bibitem{review10}
Jieping Ye.
\newblock Generalized low rank approximations of matrices.
\newblock {\em Machine Learning}, 61(1):167--191, 2005.

\bibitem{disco}
Reza~Bosagh Zadeh and Ashish Goel.
\newblock Dimension independent similarity computation.
\newblock {\em The Journal of Machine Learning Research}, 2012.

\bibitem{blogpost}
Reza~Bosagh Zadeh and Ashish Goel.
\newblock Twitter engineering blog: Dimension independent similarity
  computation.
\newblock {\em
  http://engineering.twitter.com/2012/11/dimension-independent-similarity.html},
  2012.

\bibitem{sparkpaper}
Matei Zaharia, Mosharaf Chowdhury, Michael~J Franklin, Scott Shenker, and Ion
  Stoica.
\newblock Spark: cluster computing with working sets.
\newblock In {\em Proceedings of the 2nd USENIX conference on Hot topics in
  cloud computing}, pages 10--10, 2010.

\end{thebibliography}

\end{document}